\newtheorem{lemma}{Lemma}
\theoremstyle{definition}
\begin{document}
\title{Time-headway distribution for periodic totally asymmetric exclusion process with various updates}

\author[fjfi]{P. Hrab\'ak\corref{cor1}}
\ead{pavel.hrabak@fjfi.cvut.cz}
\author[fjfi]{M. Krb\'alek}
\ead{milan.krbalek@fjfi.cvut.cz}
\cortext[cor1]{Corresponding author. Tel.: +420 224 358 567; fax: +420 234 358 643}
\address[fjfi]{Faculty of Nuclear Sciences and Physical Engineering, Czech Technical University in Prague, Trojanova 13, 120 00 Prague, Czech Republic}

\begin{abstract}
The totally asymmetric exclusion process (TASEP) with periodic boundaries is considered as traffic flow model. The large-L approximation of the stationary state is used for the derivation of the time-headway distribution (an important microscopic characteristics of traffic flow)  for the model with generalized update (genTASEP) in both, forward- and backward-sequential representation. The usually used updates, fully-parallel and regular forward- and backward-sequential, are analysed as special cases of the genTASEP. It is shown that only for those cases, the time-headway distribution is determined by the flow regardless to the density. The qualitative comparison of the results with traffic data demonstrates that the genTASEP with backward order and attractive interaction evinces similar properties of time-headway distribution as the real traffic sample.
\end{abstract}

\begin{keyword}
Totally asymmetric exclusion process, time-headway distribution, generalized  ordered-sequential update.
\end{keyword}

\maketitle


\section{Introduction}
Despite its simplicity, the TASEP exhibits some essential aspects of collective motion, and therefore can be considered as one of the simplest models of traffic flow~\cite{ChoSanSch2000PR,SchChoNis2010}. Dynamics of the model is however highly influenced by the used updating procedure. More complicated updates are investigated in order to capture some features of pedestrian dynamics or traffic flow~\cite{RajSanSchSch1998JSP, WolSchSch2006JPA, AppCivHil2011JSM, AppCivHil2011JSMb, DerPogPovPri2012JSM}. The influence of the updates on model dynamics is a motivation for detailed investigation of microscopic traffic-flow characteristics as the time-headway distribution and its dependence on considered updates. The time-headway distribution is a generic characteristics influenced mainly by the density of vehicles or particles. Therefore, the TASEP with periodic boundaries preserving the overall density is considered in this article.

According to our knowledge, the time-headway distribution has been so far investigated only for the TASEP with fully-parallel update in~\cite{ChoPasSin1998EPJB, GhoMajCho1998PRE}, since it corresponds to the Nagel-Schreckenberg model~\cite{NagSch1992JP, Sch1999EPJ}. A preliminary study of the distribution for continuous-time dynamics is presented in~\cite{HraKrb2011Procedia}. In this article, we aim to extend the time-headway studies by the analytical derivation of the distribution for the TASEP with generalized update (genTASEP) introduced in~\cite{DerPogPovPri2012JSM}. The regular ordered updates and the parallel update are considered as special cases of the genTASEP. 

The time-headway distribution for TASEP is investigated in stationary state. The TASEP steady-state solution can be elegantly obtained by means of the Matrix Product Ansatz~\cite{RajSanSchSch1998JSP, DerEvaHakPas1993JoPA, BlyEva2007JoPA}, which has been used for the distance-headway study near the boundaries in~\cite{KrbHra2011JoPA}. The distance-headway distribution for fully-parallel TASEP has been derived by means of car-oriented mean field approximation in~\cite{SchSch1997JoPA} or by means of the (2,1)-cluster approximation~\cite{AvrKoh1992PRA, SchSchNagIto1995PRE}. In this article an alternative method is used for the derivation of the steady-state distribution: the appropriate mapping to the mass transport model, which has been investigated in~\cite{ZiaEvaMaj2004JSM, EvaMajZia2004JoPA} as a generalization of zero-range process~\cite{EvaHan2005JoPA}. The large-$L$ approximation of the stationary state is then used for the derivation of headway distribution.

Without any doubt, the headway distribution is important microscopic characteristics of traffic flow reflecting the repulsive interaction between vehicles. Essential aspects of the headway distribution are described in~\cite{SchChoNis2010, May1990, Treiber2013}. In this article we use traffic data measured on the Dutch highway A9 in 2002 as a reference data sample, which has been used for extensive study of distance-headway distribution in~\cite{Krb2007JoPA} as well. The distribution is typically evaluated with respect to traffic density $\varrho$, therefore we consider  one-parametric families of probability density functions $\{\wp(\,.\,;\varrho)\,;\, \varrho\in[0,\varrho_{c}]\}$ and  $\{f(\,.\,;\varrho)\,;\, \varrho\in[0,\varrho_{c}]\}$, where $\varrho_{c}$ is the density corresponding to the maximal occupancy of the road. Here $\wp(\,.\,;\varrho)$ stands for the distribution of distance-gap $\Delta x$ between the bumpers of two consecutive vehicles, and $f(\,.\,;\varrho)$ is the distribution of time-headway $\Delta t$ between the passages of two consecutive vehicles through given cross-section. To capture the essence of repulsive forces between vehicles it is convenient to scale the headway to mean value equal to one (see~\cite{Krb2007JoPA, KrbHel2004PhysicaA, KrbSebWag2001PRE}). The distribution of the scaled-headway $\Delta t/\langle\Delta t\rangle_\varrho$ will be denoted by $\bar{f}(\,\cdot\,;\varrho)$. The histograms of the empirical probability density functions $f_\mathrm{emp}(t;\varrho)$ and $\bar{f}_\mathrm{emp}(s;\varrho)$ for various intervals of density $\varrho$ [veh/km] are visualized in Figure~\ref{fig:THrealDATA}.

\begin{figure}[h!t]
\centering
	\hfill
	\includegraphics[scale=1]{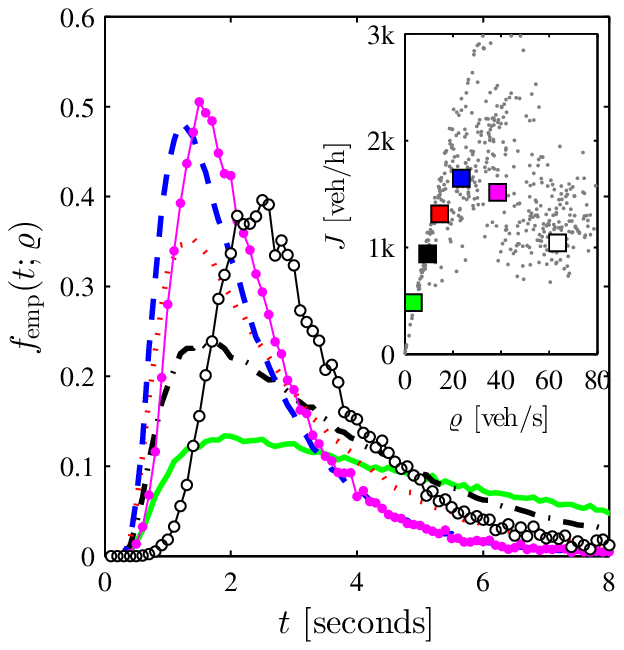}
	\hfill
	\includegraphics[scale=1]{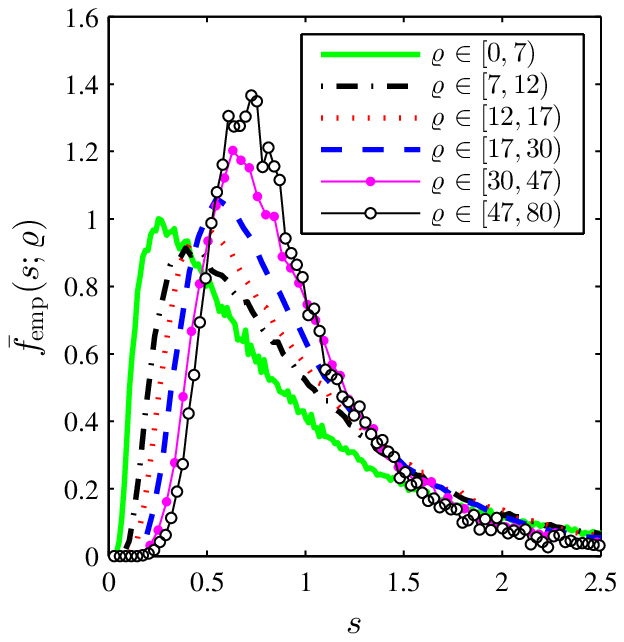}
	\hfill
\caption{Empirical time-headway distribution (left) and scaled-headway distribution (right) for different density regions $\varrho$ [veh/km]. The sub-plot visualizes average flow $J$ corresponding to given density region (squares) with empirical flow-density diagram in the background. Visualized data are extracted from the measurement on the Dutch freeway A9 in 2002.}
\label{fig:THrealDATA}
\end{figure}

Analyzing headway distributions measured on Dutch freeway we can (in agreement with~\cite[Section 2.1]{May1990} and~\cite[Section 6.8]{SchChoNis2010}) observe following aspects:
\begin{itemize}
	\item The time-headway modus is always less than the median, which is less then the mean. With increasing flow, these statistics converge towards each other.
	\item With increasing flow, the mean value and the variance of the time-headway is decreasing.
	\item The modus of the distribution corresponding to densities higher then maximal-flow density is shifted towards higher values in comparison with the distribution corresponding to lower density and similar flow.
	\item The variance of the scaled-headway distribution is decreasing with respect to $\varrho$ (similarly as in the case of distance-headways~\cite{Krb2007JoPA} and multiheadways~\cite{Krb2013JoPA}).
\end{itemize}

\section{TASEP and Considered Updates}

Let $\mathbb{L}=\left\{0, 1, \dots, L-1\right\}$ be a linear lattice consisting of $L$ equivalent sites. The periodicity of the lattice is expressed in the sense that for each $x,y\in\mathbb{L}$ it holds $x+y:=(x+y \bmod L)$. There are $N$ indistinguishable particles moving along the lattice $\mathbb{L}$ by hopping from one site to the neighbouring site. Particles follow the \emph{exclusion rule}, i.e, every site can be occupied by at most one particle. In the following we understand by the density $\varrho$ the average occupation of a site given by $\varrho=N/L$. As the density $\varrho\in[0,1]$ will often be used as the system parameter, it is implicitly supposed that the number of particles is given by $N=\lfloor\varrho L\rfloor$.

In TASEP a particle sitting in the site $x$ hops to the neighbouring site $x+1$ with probability $p\in(0,1]$, if the target site is empty. Those hops are driven by updates listed below. In this article we consider fully-parallel, forward-sequential, and backward-sequential update according to the nomenclature in \cite{RajSanSchSch1998JSP} as special cases of the generalized update~\cite{DerPogPovPri2012JSM}.


{\em Fully-parallel update}: All sites are updated simultaneously during one algorithm step. If convenient, this update will be indicated by the parallel sign $\parallel$. In the case of {fully-parallel} update the jump of a particle can be considered as an exchange of the particle and the neighbouring empty site (hole). Therefore, the motion of particles in the system with density $\varrho$ corresponds to the motion of holes with density $\sigma=1-\varrho$ in opposite direction. This is referred to as the {\em particle-hole symmetry}. Similarly evinces the particle-hole symmetry the random-sequential and continuous-time update.

{\em Ordered-sequential update}: The sites are updated sequentially according to given order, i.e., we consider site-oriented definition of the updates. We focus on the backward and forward update, i.e., sites are updated in the decreasing or increasing order respectively. If convenient, these updates will be indicated by the left ($\leftarrow$) or right ($\rightarrow$) arrows sign respectively. The backward-sequential update can (and should) be understand in the way that exactly $k$ particles from a block of $n$ particles will move by one site forward with probabilities
\begin{equation}
\label{eq:pkn}
	p(k|n)=\left\{\begin{array}{ll}
		(1-p)p^k & k< n\,,\\
		p^n & k=n\,.
	\end{array}\right.
\end{equation}
Analogically, in the forward-sequential case a particle having exactly $n$ empty sites in front of it hops $k$ sites forward with probability $p(k|n)$ defined by~\eqref{eq:pkn}.

{\em Generalized update}: This update has been introduced in~\cite{DerPogPovPri2012JSM} and further investigated in~\cite{DerPovPri2015PRE}. Originally, the sites are updated in backward order. The first particle in a block hops to the neighbouring site with probability $p$ as usual, but further particles in the block hop with a modified probability $p\gamma$, where $\gamma\in[0,1/p]$ is a parameter influencing the repulsion ($\gamma<1$) or attraction ($\gamma>1$) of particles. Here we note that in the original article, the parameter of the generalized order is $\delta=\gamma-1\in[-1,1/p-1]$.

This update can again be understood in the way that exactly $k$ particles from a block of $n$ particles will move by one site forward with probabilities
\begin{equation}
\label{eq:gpkn}
	p(k|n)=\left\{\begin{array}{ll}
		1-p & k=0\,,\\
		p(p\gamma)^{k-1}(1-p\gamma) & 0<k< n\,,\\
		p(p\gamma)^{n-1} & k=n\,.
	\end{array}\right.
\end{equation}
Similarly, we can define a forward variant of such update enabling a particle to hop $k$ sites forward with probability~\eqref{eq:gpkn}. 

It is worth noting that the forward- and backward-sequential updates evinces a dualism in motion of particles and holes, i.e., the motion of particles with density $\varrho$ in forward update corresponds to the backward-update-driven motion of holes with density $1-\varrho$ in opposite direction and vice versa.

From the dynamics it is obvious that the fully parallel update is a special case of the generalized update with $\gamma=0$. Similarly, the backward-sequential update is a generalized update with $\gamma=1$. From these reasons it is sufficient to consider only the generalized update.

For readability reasons we use the notation $q=1-p$ and $\sigma=1-\varrho$.

\section{Stationary State}

The TASEP driven by above mentioned updates can be considered as Markov process (for more detail about Markov processes see e.g.~\cite{Bremaud}) with finite state space $X_L=\{0,1\}^L$. Every state $\boldsymbol{\tau}\in X$ is a vector $\boldsymbol{\tau} =(\tau_0,\dots,\tau_{L-1})$, where $\tau_x=0$ if the site is empty and $\tau_x=1$ if the site is occupied. Let the process $\big(\boldsymbol{\tau}(t), t\in T\big)$  be defined on the canonical trajectory probability space $\left((X_L)^T, \mathcal{B}((X_L)^T), \Pr\right)$, where $T=\{k\mid k\in\mathbb{Z}\}$ represents discrete-time set , $\mathcal{B}$ denotes the Borel sigma field, and $\Pr$ is the canonical probability measure. Let us denote by 
\begin{equation}
	W(\boldsymbol{\tau}\to\boldsymbol{\tau}')=\Pr\left[\boldsymbol{\tau}(t+1)=\boldsymbol{\tau}'\mid\boldsymbol{\tau}(t)=\boldsymbol{\tau}\right]
\end{equation}
the transition rates from state $\boldsymbol{\tau}$ to state $\boldsymbol{\tau}'$. The particle preserving dynamics and the periodic lattice imply that the number of particles $N$ is constant for each trajectory, therefore the state space $X_L$ decomposes into $N+1$ irreducible subspaces
\begin{equation}
	X_{L,N}=\left\{\boldsymbol{\tau}\in X \mid \textstyle{\sum_{x=0}^{L-1}}\,\tau_x=N \right\}\,,\quad N=0,1,\dots,L\,.
\end{equation}

On each subspace $X_{L,N}$ we define the \emph{stationary measure} $\mathcal{P}_{L,N}$ as the solution of the equation set
\begin{equation}
    \sum_{\boldsymbol\tau'\neq\boldsymbol\tau}
    W(\boldsymbol\tau\to\boldsymbol\tau')\mathcal{P}(\boldsymbol\tau)=
    \sum_{\boldsymbol\tau'\neq\boldsymbol\tau}
    W(\boldsymbol\tau'\to\boldsymbol\tau)\mathcal{P}(\boldsymbol\tau')\,,\quad \boldsymbol{\tau}\in X_{L,N}\,.
\end{equation}

Let us further define the \emph{cluster probability} measure $\mathcal{P}_{L,N}^n$ for $n\in\mathbb{N}$, $n\leq L$, as
	\begin{equation}
		\mathcal{P}_{L,N}^n(s_1,s_2,\dots,s_n)=\mathcal{P}_{L,N}\left(\tau_{x+j}=s_j\,,\,j=1,2,\dots,n\right)\,,
	\end{equation}
where $x$ is an arbitrary site from $\mathbb{L}$, and $\mathcal{P}_{L,N}$ is the stationary measure on $X_{L,N}$. Note that the cluster probability is well defined since the periodic lattice consists of equivalent sites and therefore the measure $\mathcal{P}_{L,N}$ is translation invariant. Furthermore, in relation to the density $\varrho$ it holds $\mathcal{P}_{L,\lfloor\varrho L\rfloor}^{1}(1)=\varrho$, $\mathcal{P}_{L,\lfloor\varrho L\rfloor}^{1}(0)=\sigma$.

As common we will use the large $L$ approximation of the stationary measure. By the large $L$ approximation we understand the approximation of cluster probabilities by their limits
\begin{equation}
	\mathcal{P}_{L,\lfloor\varrho L\rfloor}^{n}(s_1,\dots,s_n)\approx\lim_{L\to+\infty}\mathcal{P}_{L,\lfloor\varrho L\rfloor}^{n}(s_1,\dots,s_n)=:\mathcal{P}_{\varrho}^{n}(s_1,\dots,s_n)\,,
\end{equation}
for all $n\in\mathbb{N}$, if the limits exist. The large $L$ approximation measures $\mathcal{P}_{\varrho}^n$ can be considered as measures of the process defined on the infinite line. For details concerning the convergence of the large $L$ approximation to the infinite line measure see book~\cite{Liggett}, article~\cite{GroSchSpo2003JSP} and references therein.

In further text we will use following notation for the conditional probabilities:
\begin{eqnarray}
	\mathcal{P}(\underline{s_1}s_2\dots s_n\mid s_1)&:=&\mathcal{P}(\tau_{x+j}=s_j, j=1,\dots,n\mid \tau_{x+1}=s_1)=\nonumber\\
	&=&\mathcal{P}^{n}(s_1,\dots,s_n)/\mathcal{P}^{1}(s_1)\,.
\end{eqnarray}

In this notation we can express the probability that there are exactly $d$ empty cells between two consecutive particles as 
\begin{equation}
		\wp(d;\varrho)=\mathcal{P}_{\varrho}(\underline{1}\overbrace{0\dots0}^d 1\mid 1)=\mathcal{P}_{\varrho}^n(1\overbrace{0\dots0}^d 1)/\varrho\,,\quad d\geq0\,.
\end{equation}
The distribution $\wp(\cdot;\varrho)$ is referred to as the distance-headway distribution. Let us in the following consider the lattice to be sufficiently large and let us use the large $L$ approximations $\mathcal{P}_{\varrho}$ for $\varrho\in(0,1)$. In such case it holds
\begin{equation}
 \sum_{d=0}^{+\infty}\wp(d;\varrho)=1\,,\quad\sum_{d=0}^{+\infty}(d+1)\wp(d;\varrho)=\frac{1}{\varrho}\,.
\end{equation}

To obtain the stationary measure $\mathcal{P}_\varrho$, it is convenient to map the generalized TASEP on the mass transport process~\cite{ZiaEvaMaj2004JSM,EvaMajZia2004JoPA}, for which the stationary distribution is known. The derivation can be found in the~\ref{app:ssd}. For the backward sequential update then follows that
\begin{eqnarray}
	\mathcal{P}_{\leftarrow,\varrho}(\underline{0}\overbrace{1,\dots,1}^{n}0\mid 0)&=&\phi(n;\varrho)=
	\left\{\begin{array}{ll}
		1-z/\sigma & n=0\,,\\
		z^2/(\varrho\sigma)(1-z/\varrho)^{n-1} & n\geq1\,,
	\end{array}\right.\\
	\wp_\leftarrow(d;\varrho)&=&\phi(n;\sigma)=
	\left\{\begin{array}{ll}
		1-z/\varrho & n=0\,,\\
		z^2/(\varrho\sigma)(1-z/\sigma)^{n-1} & n\geq1\,,
	\end{array}\right.
\end{eqnarray}
where
\begin{equation}
	z=z(\varrho,\gamma,p)=\frac{1-\sqrt{1-4\varrho\sigma A(p,\gamma)}}{2A(p,\gamma)}\,,\quad A(p,\gamma)=\frac{p(1-\gamma)}{1-p\gamma}\,,
\end{equation}
is one of the roots of the quadratic equation
\begin{equation}
\label{eq:z}
	A(p,\gamma)z^2-z+\varrho\sigma=0\,.
\end{equation}

For the flow $J$ then holds
\begin{equation}
\label{eq:J}
	J_{\leftarrow}(\varrho;p,\gamma)=\sum_{n=1}^{+\infty}\mathcal{P}_{\leftarrow,\varrho}^{n+1}(\overbrace{1\dots1}^{n}0)p(p\gamma)^{n-1}=\frac{pz}{1-p\gamma(1-z/\varrho)}\,.
\end{equation}

\section{Time-Headway Distribution for Parallel Updates}

This section extends the derivation of the time-headway distribution for fully-parallel update~\cite{ChoPasSin1998EPJB, GhoMajCho1998PRE} by the derivation of the time-headway distribution for generalized update. In the following we will use the duality of forward and backward update procedure. For technical reasons, the derivation is performed for the forward variant of the update. The transformation to the backward variant is straightforward by switching the symbols $\varrho$ and $\sigma$.

Let us denote by $k_\mathrm{in}^{\alpha}(x)$ the time (in steps of the algorithm) at which the particle $\alpha$ enters the site $x$ and by $k_\mathrm{out}^{\alpha}(x)$ the time at which the particle $\alpha$ hops out of the site $x$. Let site $x=0$ be the reference site for time-headway measurement. Let us further consider two consecutive particles, which will be referred to as the \emph{leading particle} (abbr. LP) and the \emph{following particle} (abbr. FP). The aim of following calculations is to derive the distribution of the time interval $\Delta k=k_\mathrm{out}^\mathrm{FP}(0)-k_\mathrm{out}^\mathrm{LP}(0)$.

The \emph{time-headway distribution} is defined by means of the \emph{time-headway probabilities} $f(k)$, $k\in\mathbb{N}$, where $f(k;\varrho)=\Pr(\Delta k=k\mid\varrho)$. The own derivation of the time-headway distribution is provided in~\ref{app:thd}. The final formula for the forward variant of the genTASEP model for $p\in(0,1)$, $\gamma\in[0,1/p)$, and $\varrho\in(0,1)$ is
\begin{eqnarray}
\label{eq:fkfinfor}
	f_{\to}(k;\varrho,p,\gamma)
	&=& \frac{pz}{\sigma-z}\frac{1}{1-\omega}\left(1-\frac{pz}{\sigma}\frac{1}{1-\omega}\right)^{k-1}\nonumber\\
	&+&\frac{pz}{\varrho-z}(1-\omega)\left(1-\frac{pz}{\varrho}\right)^{k-1}\nonumber\\
	&-&\left[\frac{pz}{\sigma-z}(1+\omega)+\frac{pz}{\varrho-z}(1-\omega)\right]q^{k-1}\nonumber\\
	&-&p^2\frac{1-p\gamma}{1-p}q^{k-1}(k-1)\,,
\end{eqnarray}
where $\omega=\omega(\varrho,p,\gamma)=1-p\gamma(1-z/\sigma)$ and
\begin{equation}
	z=z(\varrho,\gamma,p)=\left(1-\sqrt{1-4\varrho\sigma\frac{p(1-\gamma)}{1-p\gamma}}\right)\left(\frac{2p(1-\gamma)}{1-p\gamma}\right)^{-1}\,.
\end{equation}

The time-headway distribution for backward variant of the genTASEP is then
\begin{eqnarray}
\label{eq:fkfinbac}
	f_{\leftarrow}(k;\varrho,p,\gamma)
	&=& \frac{pz}{\varrho-z}\frac{1}{1-\bar{\omega}}\left(1-\frac{pz}{\varrho}\frac{1}{1-\bar{\omega}}\right)^{k-1}\nonumber\\
	&+&\frac{pz}{\sigma-z}(1-\bar{\omega})\left(1-\frac{pz}{\sigma}\right)^{k-1}\nonumber\\
	&-&\left[\frac{pz}{\varrho-z}(1+\bar{\omega})+\frac{pz}{\sigma-z}(1-\bar{\omega})\right]q^{k-1}\nonumber\\
	&-&p^2\frac{1-p\gamma}{1-p}q^{k-1}(k-1)\,,
\end{eqnarray}
where $\bar{\omega}=\bar{\omega}(\varrho,p,\gamma)=1-p\gamma(1-z/\varrho)$.

Putting $\gamma=0$ we obtain the known formula for parallel update
\begin{eqnarray}
\label{eq:fkpar}
	f_{\parallel}(k;\varrho,p)&=&f_{\to}(k;\varrho,p,0)=f_{\leftarrow}(k;\varrho,p,0)=\nonumber\\
	&=& \frac{py}{\sigma-z}\left(1-\frac{py}{\sigma}\right)^{k-1}  + \frac{py}{\varrho-y}\left(1-\frac{py}{\varrho}\right)^{k-1}\nonumber\\
	&-&\left[\frac{py}{\sigma-z}+\frac{py}{\varrho-z}\right]q^{k-1} - p^2 q^{k-2}(k-1)\,,
\end{eqnarray}
where $y=y(\varrho,p)=z(\varrho,p,0)$.

Similarly we can obtain the distribution for forward-sequential update with $\gamma=1$ resulting to
\begin{equation}
\label{eq:forfk}
    f_{\rightarrow}(k;\varrho,p,1)=
    \frac{p\varrho}{q\sigma}\left(\frac{q}{1-p\sigma}\right)^k + \frac{p\sigma}{\varrho}(1-p\sigma)^k -\left[\frac{p\varrho}{q\sigma}+\frac{p\sigma}{\varrho}\right]q^k -p^2kq^{k-1}
\end{equation}

The convergence of the time-headway distribution corresponding to finite lattice size $L<<\infty$ to the large $L$ approximation~\eqref{eq:forfk} is demonstrated in Figure~\ref{fig:THforSIML}. The histogram estimation $\hat{f}_{\to}^\mathrm{sim}$ of the probability density function $f_{\to}$ based on the computer  simulation of the model with finite lattice size $L$ is displayed. We can observe that the formula~\eqref{eq:forfk} fits sufficiently the simulation data for $L=50$.

\begin{figure}
\centering
	\includegraphics[scale=1]{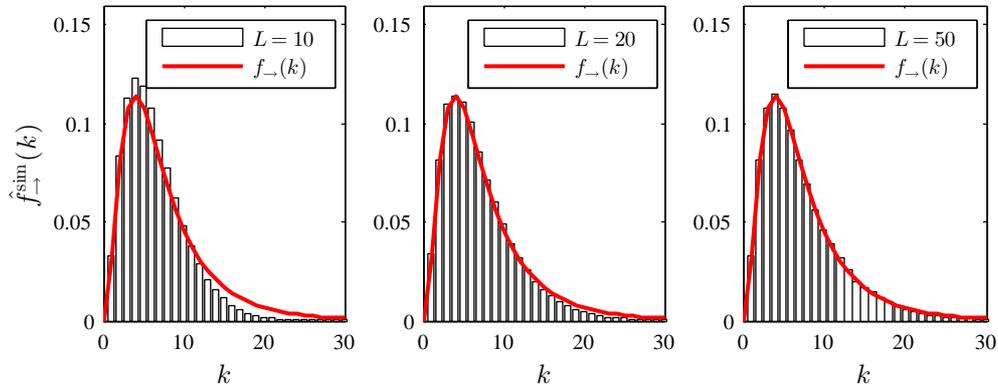}
\caption{Comparison of empirical probability density function $\hat{f}_{\to}^\mathrm{sim}$with the derived formula~\eqref{eq:forfk}. Number of observations $n=2\cdot 10^5$, $\varrho=0.2$, $\gamma=1$, lattice length $L=10$ (left), $L=20$ (center), $L=50$ (right).}
\label{fig:THforSIML}
\end{figure}

Let us for completeness present the time-headway distribution $f_{ct}(t;\varrho)$ for continuous time dynamics, which can be obtained from the fully-parallel update distribution by the limit $p\to0_+$ with the appropriate time scaling $t=pk$. The distribution function is
\begin{equation}
	F_{ct}(t;\varrho)=\lim_{p\to0_+}\sum_{k\leq t/p}f_{\parallel}(k;\varrho,p)\,,
\end{equation}
which leads to probability density function
\begin{equation}
\label{eq:ft}
	f_{ct}(t;\varrho)=\frac{\varrho}{\sigma}\left(\mathrm{e}^{-\varrho t}-\mathrm{e}^{-t}\right)+\frac{\sigma}{\varrho}\left(\mathrm{e}^{-\sigma t}-\mathrm{e}^{-t}\right)-t\mathrm{e}^{-t}\,,\quad t\geq0\,.
\end{equation}

\section{Mutual relations and correspondence}

A typical characteristics of the fully-parallel update is the particle-hole symmetry mentioned above. In processes with particle-hole symmetry the holes behave exactly like particles in system with complementary density $\sigma=1-\varrho$. This projects to the stationary distribution in the way that $\forall n\in\mathbb{N}$ and $\forall(s_1,\dots,s_n)\in\{0,1\}^n$ it holds
\begin{equation}
\label{eq:phs}
	\mathcal{P}_{\varrho}^{n}(s_1,\dots,s_n)=\mathcal{P}_{\sigma}^{n}(1-s_1,\dots,1-s_n)\,.
\end{equation}
Furthermore, the symmetry in the hopping probabilities implies that the flow and time-headway distribution are symmetrical with respect to $\varrho$ as well, i.e.,
\begin{equation}
	J(\varrho)=J(\sigma)\,,\qquad f(\cdot;\varrho)=f(\cdot;\sigma)\,.
\end{equation}

There is a dualism between forward and backward dynamics (both regular and generalized) similar to the particle-hole symmetry. The motion of holes in forward update corresponds to the motion of particles in the backward update and vice versa. Therefore it holds
\begin{equation}
\label{eq:fbdual}
	\mathcal{P}_{\rightarrow;\varrho}^{n}(s_1,\dots,s_n)=\mathcal{P}_{\leftarrow;\sigma}^{n}(1-s_1,\dots,1-s_n)\,, \quad J_{\rightarrow}(\varrho)=J_{\leftarrow}(\sigma)\,,\quad f_{\rightarrow}(\cdot;\varrho)=f_{\leftarrow}(\cdot;\sigma)\,.
\end{equation}
The particle-hole symmetry for fully-parallel update can be illustrated by the fact that
\begin{equation}
	f_{\parallel}(k;\varrho,p)=f_{\rightarrow}(k;\varrho,p,0)=f_{\leftarrow}(k;\varrho,p,0)\,.
\end{equation}

In this section we discuss under which conditions there can be found similar relation to~\eqref{eq:phs} for $\gamma\neq0$, namely for which $\gamma$ it holds that $\forall\varrho_1,\varrho_2(0<\varrho_1<\varrho_{\mathrm{max}}<\varrho_2<1)$
\begin{equation}
\label{eq:Jtof}
	J(\varrho_1)=J(\varrho_2) \implies f(\cdot;\varrho_1)=f(\cdot;\varrho_2)\,.
\end{equation}
Here we note that the opposite implication is always fulfilled since $J(\varrho)=(\mathbb{E}_{f(\cdot;\varrho)}\Delta t)^{-1}$, where $\mathbb{E}_f$ stands for the expectation value with respect to distribution $f$. The property~\eqref{eq:Jtof} means that the time-headway distribution is fully determined by the flow regardless to the density.

The flow $J$ defined by the equation~\eqref{eq:J} is a continuous function with respect to density $\varrho$ with one maximum $J_{\mathrm{max}}$ at $\varrho_{\mathrm{max}}$ and with $J(0)=0=J(1)$. Therefore $\forall \varrho_1\in(0,\varrho_{\mathrm{max}})$ there exists unique $\varrho_2\in(\varrho_{\mathrm{max}},1)$ for which $J(\varrho_1)=J(\varrho_2)$. E.g. in parallel dynamics $\varrho_2=1-\varrho_1$. Figure~\ref{fig:fkgamma} shows eight sub-plots of the probability density $f_{\leftarrow}(\cdot;\varrho,p,\gamma)$ for different values of $\gamma$. Each sub-plot contains the probability density for $\varrho=\varrho_{\mathrm{max}}$ and $\varrho_1, \varrho_2$ corresponding to the flow $J_{\mathrm{max}}/2$.

\begin{figure}[htb]
\centering
	\includegraphics[width=\textwidth]{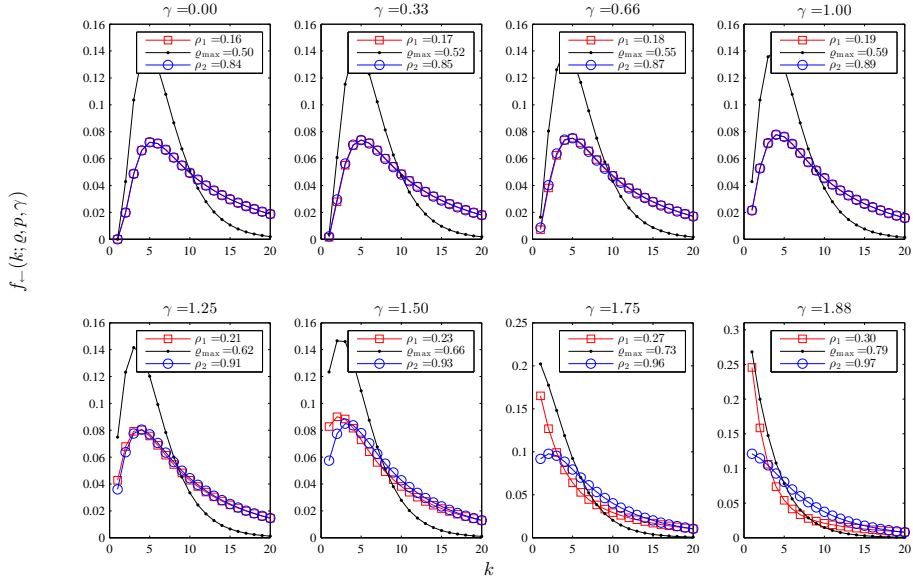}
\caption{The time-headway distribution $f_{\leftarrow}(\cdot,\varrho,p,\gamma)$ for $p=0.5$ and $\gamma\in\{0, 1/3, 2/3, 1, 5/4, 3/2, 7/4, 2-\epsilon\}$. Black dots ($\bullet$) correspond to the maximal flow density $\rho_\mathrm{max}$; $\rho_1<\rho_\mathrm{max}$ ($\square$) and $\rho_2>\rho_\mathrm{max}$ ($\bigcirc$) correspond to the flow $J_\mathrm{max}/2$.}
\label{fig:fkgamma}
\end{figure}

Figure~\ref{fig:df1f2} shows the symmetrical $\chi^2$ distance $d(\cdot,\cdot)$ between $f_{\leftarrow}(k;\varrho_1,p,\gamma)$ and $f_{\leftarrow}(k;\varrho_2,p,\gamma)$ for $p=0.5$ with respect to $\gamma\in[0,2]$, where
\begin{equation}
\label{eq:df1f2}
	d(f_1,f_2)=\sum_{k=1}^{+\infty}\frac{[f_1(k)-f_2(k)]^2}{f_1(k)+f_2(k)}\,.
\end{equation}

\begin{figure}[htb]
\centering
	\includegraphics[width=.8\textwidth]{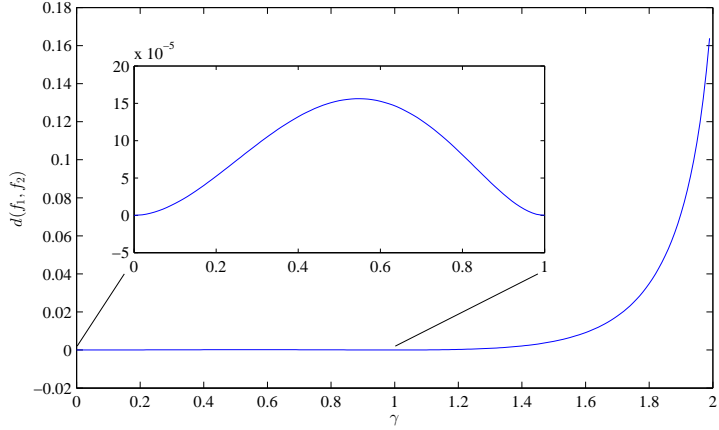}
\caption{The visualisation for of the distance~\eqref{eq:df1f2} between $f_1=f_{\leftarrow}(\cdot,\varrho_1,p,\gamma)$ and $f_2=f_{\leftarrow}(\cdot,\varrho_2,p,\gamma)$ for $p=0.5$ and $J(\varrho_1)=J(\varrho_2)=J_\mathrm{max}/2$.}
\label{fig:df1f2}
\end{figure}

From these graphs we may conclude that the implication~\eqref{eq:Jtof} holds only for the fully parallel update $\gamma=0$ and regular backward- (or forward-) sequential update $\gamma=1$. Nevertheless, the difference of the distributions for repulsive dynamics, i.e., $\gamma\in(0,1)$, is very small. For the attractive dynamics with $\gamma>1$, the time-headway distribution is not even closely determined by the flow $J$, which is an interesting property in comparison to the real traffic data.

Let us now focus on the cases, for which the implication~\eqref{eq:Jtof} holds. In the case of fully-parallel update $\gamma=0$ the implication holds due to the particle hole symmetry. To prove the case of regular froward- or backward-sequential update we show a relation between the fully-parallel and ordered-sequential updates. We claim that for $p\in(0,1)$ there is a one-to-one mapping $b:(0,1)\to(0,1)$ such that for each $k=\{1,2,\dots\}$ it holds
\begin{equation}
\label{eq:fkforbacpar}
	f_{\leftarrow}(k;1-\varrho)=f_{\rightarrow}(k;\varrho)=f_{\parallel}\left(k+1;b^{-1}(\varrho)\right)\,.
\end{equation}
The first equality is fulfilled due to the dualism between forward and backward update. To prove the second equality it is sufficient to compare the formulas~\eqref{eq:fkpar} and~\eqref{eq:forfk}.
Then we obtain that
\begin{equation}
	b(\varrho)=1-\frac{y}{\varrho}\,,\quad y=\left(1-\sqrt{1-4\varrho\sigma}\right)/2\,.
\end{equation}
Then for each $\varrho_1<\varrho_{\rightarrow}^{\mathrm{max}}$ there exists unique $\varrho_{\parallel}=b^{-1}(\varrho_1)<\varrho_{\parallel}^{\mathrm{max}}=1/2$. The corresponding $\varrho_2>\varrho_{\rightarrow}^{\mathrm{max}}$ can be then found as $b(1-\varrho_{\parallel})$. All together we get
\begin{equation}
	\varrho_2=b\left(1-b^{-1}(\varrho_1)\right)\,.
\end{equation}
This proves that in case of the sequential update the time-headway distribution is determined by the flow regardless to the density.

Furthermore, the relation~\eqref{eq:fkforbacpar} implies that the regular backward- and forward-sequential update changes the behaviour of the fully parallel TASEP only by the scaling of the density via the bijection $b: \varrho_{\rightarrow}\mapsto\varrho_{\parallel}$. Due to~\eqref{eq:fkforbacpar} it is possible to relate the exact value of the flow $J$ for these updates with corresponding densities, since
\begin{equation}
	\mathbb{E}_{f_{\rightarrow}(\cdot;b(\varrho))}\Delta k=\sum_{k=1}^{+\infty}kf_{\rightarrow}(k;b(\varrho))=\sum_{k=1}^{+\infty}kf_{\parallel}(k+1;\varrho)=\mathbb{E}_{f_{\parallel}(\cdot;\varrho)}\Delta k-1
\end{equation}
and therefore
\begin{equation}
	J_{\leftarrow}(1-b(\varrho))=J_{\rightarrow}(b(\varrho))=\frac{J_{\parallel}(\varrho)}{1-J_{\parallel}(\varrho)}\,.
\end{equation}
This is analogical result as in~\cite[Section 5]{EvaRajSpe1999JSP} for open boundaries.

\section{Conclusions}

This paper provides analytical derivation of the time-headway distribution for TASEP with generalized update (genTASEP) in both, backward- and forward-sequential representation. We provide formulas~\eqref{eq:fkfinfor} and~\eqref{eq:fkfinbac} for the probability density function $f(k;\varrho,p,\gamma)$ for $p\in(0,1)$, $\varrho\in(0,1)$, and $\gamma\in[0,1/p)$. The limiting cases, i.e., $p=1$ and $\gamma=1/p$ can be obtained using the appropriate limit.

Comparing the results for the fully-parallel ($\gamma=0$) and regular ordered-sequential update ($\gamma=1$), we have found out that there is a density-scaling bijection $b: \varrho_{\rightarrow}\mapsto\varrho_{\parallel}$, which relates the fully-parallel and regular ordered-sequential update via the relation~\eqref{eq:fkforbacpar}. The reason for this correspondence lies in the fact that the hopping probability in sequential update does not depend on the position of the particle in the block, therefore the dynamics is the same even if we look at the state in the middle of the update. A consequence of this correspondence is the fact that in the cases $\gamma\in\{0,1\}$ the time-headway distribution is determined by the flow regardless, whether the system density is above or below the maximal flow density $\varrho_{\mathrm{max}}$.

This determination breaks down for other values of the parameter $\gamma\notin\{0,1\}$. Nevertheless, the Figure~\ref{fig:df1f2} demonstrates that for repulsive values of $\gamma\in(0,1)$, the difference between the corresponding distributions is very low. Such a difference significantly increases with the attractive values of $\gamma\in(1,1/p)$. We can observe qualitatively similar behaviour to the traffic sample for the generalized backward-sequential update, originally introduced in~\cite{DerPogPovPri2012JSM}.

\begin{figure}[htb]
\centering
	\includegraphics[width=.5\textwidth]{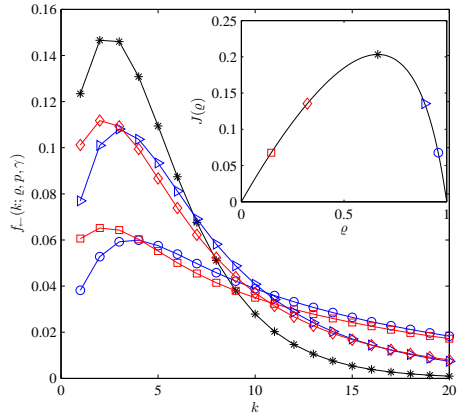}
\caption{Illustration of the distribution $f_{\leftarrow(\cdot,\varrho,p,\gamma)}$ with $p=0.5$, $\gamma=1.5$. The densities are chosen to correspond with $J_{\mathrm{max}}/3$ and $J_{\mathrm{max}}\cdot 2/3$, as depicted in the fundamental diagram in the sub-figure.}
\label{fig:fkrho}
\end{figure}

Figure~\ref{fig:fkrho} shows two pairs of the time-headway distributions $f_{\leftarrow}(\cdot;\varrho_1)$, $f_{\leftarrow}(\cdot;\varrho_2)$ corresponding to the same flow. In general, we can see that the distribution corresponding to the density $\varrho_2>\varrho_{\mathrm{max}}$ (blue line) has lower variance and modus shifted more to the higher values of $k$ than the corresponding distribution for $\varrho_1<\varrho_{\mathrm{max}}$ (red line). Such characteristics has been observed in the real traffic data (compare with the Figure~\ref{fig:THrealDATA} and list of properties below it).

This means that that the synchronized motion of particles in genTASEP mirrors in the time-headway distribution similarly to the real traffic. Therefore, a model based on the genTASEP can describe well the traffic sample even on the microscopic level characterized by the time-headway distribution. The derived formulas~\eqref{eq:fkfinfor} and~\eqref{eq:fkfinbac} can then serve as an additional calibration tool for finding appropriate values of $p$ and $\gamma$. To develop a comprehensive technique of the parameter estimation is a subject of further research.

\section*{Acknowledgement}
This work was supported by the Czech Science foundation under the grant GA\v{C}R 15-15049S and the CTU grant SGS15/214/OHK4/3T/14. 

\appendix

\section{Steady state derivation}
\label{app:ssd}

For the steady state derivation we use the result of \cite{ZiaEvaMaj2004JSM} by mapping the TASEP with generalized update to appropriate mass transport process (abbr. MTP).

Consider a mass transport process of $\tilde{N}$ particles on a lattice of $\tilde{L}$ sites with periodic boundaries and parallel dynamics. In every time step from each site $j$ containing $n_{j}$ particles hops $k_j\leq n_j$ particles to the neighbouring site with probability $\varphi(k_j|n_j)$. The stationary distribution $\Phi(\bf n)$ of such process can be written in a factorized form
\begin{equation}
	\Phi(n_1,\dots,n_{\tilde{L}})=Z({\tilde{L},\tilde{N}})^{-1}\prod_{j=1}^{\tilde{L}}\phi({n_{j}})
\end{equation}
\emph{if and only if} the cross ratio
\begin{equation}
	R(k,n)=\frac{\varphi(k+1|n+2)\varphi(k|n)}{\varphi(k+1|n+1)\varphi(k|n+1)}
\end{equation}
for $0\leq k\leq n$ with $\varphi(0|0)=1$ does not depend on $k$, i.e., $R(n):=R(k,n)$. The one-site factors $\phi(n)$ are then in the form
\begin{equation}
	\phi(n)=\phi(0)\left(\frac{\phi(1)}{\phi(0)}\right)^{n}\cdot\prod_{j=0}^{n-2}R(j)^{-n+j+1}\,\quad n\geq2\,.
\end{equation} 
Let us assume for simplicity that the one site factors $\phi(n)$ fulfil
\begin{equation}
\label{eq:Aconst}
	\sum_{n=0}^{\infty}\phi(n)=1\quad {\rm and} \quad\sum_{n=0}^{\infty}n\phi(n)=\tilde{\varrho}:=\frac{\tilde{N}}{\tilde{L}}\,.
\end{equation}
This means that $\phi(n)$ are one site marginals of the product measure in the large $\tilde{L}$ approximation.

The TASEP with generalized update can be mapped to the MTP with chipping function
\begin{equation}
	\varphi(k|n)=\begin{cases}
		1-p& k=0\,,\\
		p(p\gamma)^{k-1}(1-p\gamma)& 0<k<n\,,\\
		p(p\gamma)^{n-1}& k=n\,,
	\end{cases}
\end{equation}
for $n\geq1$ and $\varphi(0|0)=1$. Hence the factors $\phi(n)$ are
\begin{equation}
	\phi(n)=\phi(0)\left(\frac{\phi(1)}{\phi(0)}\right)^{n}\left(\frac{1-p}{1-p\gamma}\right)^{n-1}\,\quad n\geq2\,.
\end{equation}
Using \eqref{eq:Aconst} and little computational effort we obtain
\begin{equation}
	{\phi(0)=1-\frac{z}{\sigma}\,,\qquad
	\phi(n)=\frac{z^{2}}{\varrho\sigma}\left(1-\frac{z}{\varrho}\right)^{n-1}\,,\quad n\geq1\,,}
\end{equation}
where
\begin{equation}
	z=z(\rho,\gamma,p)=\frac{1-\sqrt{1-4\rho\sigma\frac{p(1-\gamma)}{1-p\gamma}}}{2\frac{p(1-\gamma)}{1-p\gamma}}
\end{equation}

The mass transport process can be transformed to a process following the exclusion rule, if we associate the piles of the MTP with the clusters of particles, i.e., the pile of $n$ particles corresponds to a cluster of $n$ particles followed by an empty site as suggested e.g. in~\cite{KauMahHarPRE2005}. Then we have $\tilde{L}=L-N$, $\tilde{N}=N$ and
\begin{equation}
\label{eq:Arho}
	\tilde{\varrho}=\frac{\tilde{N}}{\tilde{L}}=\frac{N}{L-N}=\frac{\varrho}{1-\varrho}\,.
\end{equation}

The stationary measure $\mathcal{P}_{\varrho}$ is then related to the one-site factors via
\begin{equation}
\label{eq:Aphin}
	\phi(n)=\mathcal{P}_{\varrho}(0\overbrace{1\dots{1}}^{n}\underline{0}\mid \tau_{x}=0)\,.
\end{equation}
To express any probability $\mathcal{P}_{\varrho}^n$ in the terms of $\phi(n)$ it is sufficient to use repeatedly the \emph{Kolmogorov consitency conditions} (mentioned e.g. in~\cite{AvrKoh1992PRA, SchSchNagIto1995PRE})
\begin{equation}
\label{eq:kolmogorov}
	\mathcal{P}_{L,N}^{n}(s_1,s_2,\dots,s_n)=\sum_{\tau}\mathcal{P}_{L,N}^{n+1}(\tau,s_1,\dots,s_n)=\sum_{\tau}\mathcal{P}_{L,N}^{n+1}(s_1,\dots,s_n,\tau)
\end{equation}
for all $(s_1,\dots,s_n)\in\{0,1\}^n$.

Hence we get the distance-headway distribution
\begin{equation}
\label{eq:Awpphi}
	{\wp(d;\varrho)=\tfrac{\sigma}{\varrho}\big[1-\phi(0)\big]^{2}\phi(0)^{d-1}\,,\quad d\geq1\,,\quad
		\wp(0;\varrho)= 1-\tfrac{\sigma}{\varrho}[1-\phi(0)]\,.}
\end{equation}
Another needed quantity is the {Minimal-gap-length} probability
\begin{equation}
	P(n;\varrho)=\mathcal{P}_{\varrho}(1\overbrace{0\dots\underline{0}}^{n}\mid 0)=\mathcal{P}_{\varrho}^{n+1}(1\overbrace{0\dots0}^{n})/\sigma\,,\quad n\geq1\,,
\end{equation}
which is calculated as
\begin{equation}
	P(n;\varrho)=\frac{\varrho}{\sigma}\sum_{d=n}\wp(d;\varrho)=\frac{z}{\sigma}\left(1-\frac{z}{\sigma}\right)\,.
\end{equation}

\section{Time-headway derivation for genTASEP}
\label{app:thd}

Let us consider the forward variant of genTASEP. Under this update, the LP can hop more sites forward during one time step, it is possible that it hops from the site 1 in the same step as it hopped for the site 0, i.e., $k_\mathrm{out}^\mathrm{LP}(0)=k_\mathrm{out}^\mathrm{LP}(1)$. This happens with the probability
\begin{equation}
	\omega:=p\gamma\mathcal{P}_{\varrho,\to}(\underline{0}0|0)=p\gamma\mathcal{P}_{\sigma,\leftarrow}(\underline{1}1|1)=p\gamma\wp_{\leftarrow}(0;\sigma)=p\gamma(1-z/\sigma)\,.
\end{equation}
It is then convenient to calculate the time-headway probability as
\begin{equation}
	f(k)=\omega f(k\mid\tau_1=0)+(1-\omega)f(k\mid\tau_1=1)
\end{equation}

Let us first deal with the probability $f(k\mid\tau_1=0)$. Knowing that the site 1 is already empty, we know that the FP sitting in site $-n$ has to hop $(n+1)$-times forward during exactly $k$ steps. Let $C(n,k)$ stand for the conditional probability, that a particle performs $n$ forward hops during exactly $k$ steps, given that there are at least $n$ empty sites ahead. Then it reads
\begin{equation}
	f(k\mid\tau_1=0)=\sum_{n=1}^{+\infty}C(n+1,k)P(n)=\sum_{n=1}^{+\infty}(1-\phi)\phi^{n-1}C(n+1,k)\,,
\end{equation}
where we denote by $\phi:=1-z/\sigma$ for readability reasons.
\begin{lemma}
	For $n,k\geq1$ it follows
	\begin{equation}
		C(n,k)=p(p\gamma)^{n-1}q^{k-1}\sum_{a=0}^{\min\{n-1,k-1\}}{n-1 \choose a}{k-1 \choose a}\left(\frac{1-p\gamma}{q\gamma}\right)^a
	\end{equation}
\end{lemma}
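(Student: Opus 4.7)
The plan is to evaluate $C(n,k)$ by conditioning on the number of \emph{active} steps, where an active step is one in which the particle performs at least one hop. The assumption ``at least $n$ empty sites ahead'' ensures that the particle never runs out of room while accumulating its first $n$ hops, so by~\eqref{eq:gpkn} the per-step hop counts $A_1,\dots,A_k$ can be treated as i.i.d.\ with $\Pr(A_j=0)=q$ and $\Pr(A_j=a)=p(p\gamma)^{a-1}(1-p\gamma)$ for $a\geq 1$. The event that the $n$-th hop falls in step $k$ translates to $S_{k-1}:=A_1+\cdots+A_{k-1}\leq n-1$ together with $A_k\geq n-S_{k-1}$. Crucially, step $k$ is allowed to carry further hops beyond the $n$-th one, which yields a geometric \emph{tail} probability for $A_k$ rather than a point mass.

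Next I would introduce $m$, the total number of active steps in $\{1,\dots,k\}$. Since step $k$ must be active, the other $m-1$ active steps are chosen from $\{1,\dots,k-1\}$ in $\binom{k-1}{m-1}$ ways, and each of the $k-m$ silent steps contributes a factor $q$. Writing $s=S_{k-1}$ (so that $m-1\leq s\leq n-1$), the number of ordered decompositions of $s$ into $m-1$ positive parts is $\binom{s-1}{m-2}$, each carrying weight $p^{m-1}(p\gamma)^{s-m+1}(1-p\gamma)^{m-1}$. Step $k$ contributes $\Pr(A_k\geq n-s)=p(p\gamma)^{n-s-1}$.

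Assembling these ingredients, each pair $(m,s)$ contributes
\begin{equation*}
\binom{k-1}{m-1}\binom{s-1}{m-2}\,q^{k-m}\,p^{m}\,(p\gamma)^{n-m}\,(1-p\gamma)^{m-1}.
\end{equation*}
Since only the binomial coefficient $\binom{s-1}{m-2}$ depends on $s$, summing $s$ from $m-1$ to $n-1$ collapses via the hockey-stick identity $\sum_{s=m-1}^{n-1}\binom{s-1}{m-2}=\binom{n-1}{m-1}$. Substituting $a=m-1$ and pulling out the prefactor $p(p\gamma)^{n-1}q^{k-1}$ turns the remaining ratio $p^{a}(p\gamma)^{-a}(1-p\gamma)^{a}q^{-a}$ into $\left(\frac{1-p\gamma}{q\gamma}\right)^{a}$, producing the claimed formula.

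The main obstacle is the asymmetric treatment of step $k$: using the tail $p(p\gamma)^{n-s-1}$ rather than the point-mass weight $p(p\gamma)^{n-s-1}(1-p\gamma)$ is what makes the prefactor come out to $p(p\gamma)^{n-1}q^{k-1}$, not an extra $(1-p\gamma)$ times that. Treating step $k$ symmetrically with the earlier active steps would instead yield the probability of performing \emph{exactly} $n$ hops in $k$ steps, which differs by a factor $(1-p\gamma)$. Everything else is standard binomial bookkeeping, and the boundary case $m=1$ (no active steps before $k$) is consistent with the convention $\binom{n-1}{0}\binom{k-1}{0}=1$.
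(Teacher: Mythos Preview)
Your proof is correct. Both arguments condition on the same quantity---your number $m$ of active steps is exactly the paper's $n_{\times\circ}$---and both arrive at the multiplicity $\binom{k-1}{m-1}\binom{n-1}{m-1}$ with weight $p^{m}(p\gamma)^{n-m}q^{k-m}(1-p\gamma)^{m-1}$, then reindex by $a=m-1$.

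The organization differs. The paper linearizes the process into a two-state Markov chain on symbols $\{\circ,\times\}$ (hop/stop), so that one step of the update becomes a run of $\circ$'s terminated by a $\times$; it then counts binary strings of $n$ $\circ$'s and $k-1$ $\times$'s ending in $\circ$ by their number of $\times\circ$ transitions, obtaining the two binomials in one stroke. You instead keep the time steps intact, introduce the intermediate variable $s=S_{k-1}$, and collapse the $s$-sum via the hockey-stick identity. Your route makes the asymmetric role of step $k$ (tail probability $p(p\gamma)^{n-s-1}$ rather than a point mass) explicit, which is a genuine clarification; the paper's encoding handles this implicitly by truncating the symbol string at the $n$-th $\circ$ without a terminal $\times$. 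Conversely, the paper's Markov-chain encoding avoids the extra summation layer and the $m=1$ boundary convention altogether.
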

\begin{proof}
	The considered particle needs to perform $k-1$ stops (denote as $\times$) and $n$ jumps (denote as $\circ$). Furthermore,  the $n$-th jump has to be performed after all $k-1$ stops. That means, there are $n-1+k-1$ that can be arbitrarily filled by $n-1$ symbols $\circ$ and $k-1$ symbols $\times$. Further, each pair $(\times\circ)$ in the sequence means that there is a hop performed with probability $p$. Similarly $(\circ\circ)$ means a hop with prob. $p\gamma$, $(\circ\times)$ a stop with prob. $1-p\gamma$, $(\times\times)$ a stop with prob. $1-p$. Let $n_{ab}$, $a,b\in\{\circ,\times\}$ denote the number of pairs $(ab)$ in the sequence. Fixing the number of pairs $n_{\times\circ}$, we get $n_{\circ\times}=n_{\times\circ}-1$, $n_{\circ\circ}=n-n_{\times\circ}$, and $n_{\times\times}=k-n_{\times\circ}$. Then
	\begin{equation}
		C(n,k)=\sum_{n_{\times\circ}=1}^{\min\{n,k\}}D(n,k,n_{\times\circ})p^{n_{\times\circ}}(p\gamma)^{n-n_{\times\circ}}q^{k-n_{\times\circ}}(1-p\gamma)^{n_{\times\circ}-1}\,,
	\end{equation}
	where $D(n,k,n_{\times\circ})$ stands for the number of sequences with $n_{\times\circ}$ pairs $({\times\circ})$. There are $n_{\times\circ}$ blocks of  $n_{\times\times}$ symbols $(\times\times)$ and $n_{\times\circ}$ blocks of $n_{\circ\circ}$ pairs $(\circ\circ)$. The number of all possible sequences is then
	\begin{equation}
		D(n,k,n_{\times\circ})={n_{\times\circ}+n_{\times\times}-1\choose n_{\times\circ}}{n_{\times\circ}+n_{\circ\circ}-1\choose n_{\times\circ}}\,.
	\end{equation}
\end{proof}

Then
\begin{eqnarray}
	f(k\mid\tau_1=0)&=&\sum_{n=1}^{+\infty}(1-\phi)\phi^{n-1}p(p\gamma)^{n}q^{k-1}\sum_{a=0}^{\min\{n,k-1\}}{n-1 \choose a}{k-1 \choose a}\left(\frac{1-p\gamma}{q\gamma}\right)^a\\
	&=& (1-\phi)p^2\gamma q^{k-1}\sum_{a=1}^{k-1}{k-1 \choose a}\left(\frac{1-p\gamma}{q\gamma}\right)^a\sum_{n=1}^{+\infty}{n\choose a}(p\gamma\phi)^{n-1}\,,
\end{eqnarray}
where we use the convention ${n\choose a}=0$ for $a>n$.
\begin{lemma}
	For $a\geq1$ it reads
	\begin{equation}
		\sum_{n=1}^{+\infty}{n\choose a}x^{n-1}=\frac{x^{a-1}}{(1-x)^{a+1}}\,.
	\end{equation}
\end{lemma}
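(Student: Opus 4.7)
My plan is to identify the left-hand side as (a shift of) the standard negative-binomial generating function and derive it from the geometric series by repeated differentiation. Concretely, start from $\sum_{n\geq 0} x^n = (1-x)^{-1}$, valid for $|x|<1$, and differentiate $a$ times termwise. On the left this produces $\sum_{n\geq a} n(n-1)\cdots(n-a+1)\,x^{n-a} = a!\sum_{n\geq a}\binom{n}{a}x^{n-a}$, while on the right it produces $a!(1-x)^{-(a+1)}$. Dividing by $a!$ gives the clean identity
\begin{equation}
\sum_{n\geq a}\binom{n}{a}x^{n-a}=\frac{1}{(1-x)^{a+1}}.
\end{equation}

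The remaining task is a bookkeeping step to match the form in the lemma. Multiply both sides by $x^{a-1}$ to obtain $\sum_{n\geq a}\binom{n}{a}x^{n-1} = x^{a-1}(1-x)^{-(a+1)}$. Because $\binom{n}{a}=0$ for $0\le n<a$ (and $a\ge1$ rules out the $n=0$ term anyway), the summation range $n\geq a$ may freely be replaced by $n\geq 1$ using the convention $\binom{n}{a}=0$ for $a>n$ already adopted in the preceding computation. This yields exactly $\sum_{n=1}^{+\infty}\binom{n}{a}x^{n-1} = x^{a-1}/(1-x)^{a+1}$.

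There is no real obstacle here; the only point requiring a sentence of justification is the termwise differentiation, which is legitimate inside the open disk of convergence $|x|<1$ (and this is the regime in which the identity will subsequently be applied, since $x$ will be instantiated as $p\gamma\phi$ with $p\gamma\in[0,1)$ and $\phi\in[0,1)$). An alternative, equally short route is induction on $a$ using Pascal's rule $\binom{n}{a+1}=\binom{n-1}{a}+\binom{n-1}{a+1}$, which after re-indexing gives a recursion equivalent to multiplication by $x/(1-x)$ on the right-hand side; I would mention this only as a cross-check rather than as the primary argument, since the differentiation derivation is the most transparent.
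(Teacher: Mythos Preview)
Your proof is correct and complete. The paper itself states this lemma without proof, treating the identity as standard; your argument via $a$-fold differentiation of the geometric series is the natural one, and your remark on the convention $\binom{n}{a}=0$ for $a>n$ (explicitly adopted in the surrounding text) cleanly reconciles the summation range.
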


Using preceding lemma we obtain
\begin{equation}
	f(k\mid\tau_1=0)=\frac{p(1-\phi)}{\phi(1-p\gamma\phi)}B_1^{k-1}-\frac{p(1-\phi)}{\phi}q^{k-1}\,,
\end{equation}
where
\begin{equation}
	B_1=\left(1-\frac{p(1-\phi)}{1-p\gamma\phi}\right)\,.
\end{equation}

For the calculation of $f(k\mid\tau_1=1)$ it is convenient to decompose the time interval $\Delta k$ as
\begin{equation}
	\Delta k = k_\mathrm{out}^\mathrm{FP}(0)-k_\mathrm{out}^\mathrm{LP}(0) =\Delta k_{1}+\Delta k_{2}\,,
\end{equation}
where
\begin{equation}
	\Delta k_{1} = k_\mathrm{in}^\mathrm{FP}(0)-k_\mathrm{out}^\mathrm{LP}(0)\\
\end{equation}
is the number of steps it takes the FP to enter the site 0 after LP left it at $k_\mathrm{out}^\mathrm{LP}(0)$ and 
\begin{equation}
	\Delta k_{2} = k_\mathrm{out}^\mathrm{FP}(0)-k_\mathrm{in}^\mathrm{FP}(0)	
\end{equation}	
is the number of steps it takes the FP to leave the site 0 after entering it at $k_\mathrm{in}^\mathrm{FP}(0)$. The time line is schematically depicted in Figure~\ref{fig:TimeLine}.

\begin{figure}[htb]
	\centering
		\includegraphics[scale=1]{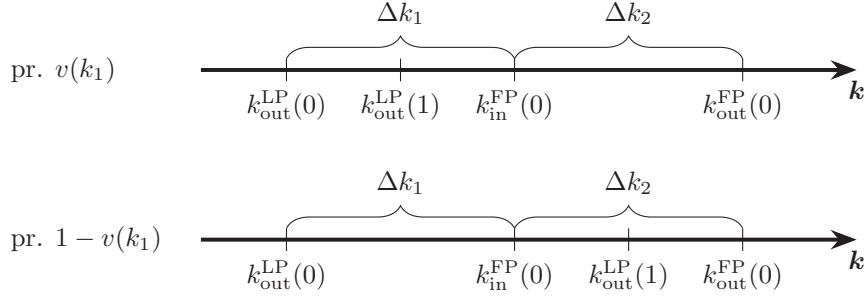}		
\caption{Mutual positions of events at $k_\mathrm{in/out}^\mathrm{LP/FP}(0/1)$ on the time line. Upper line with probability $v(k_1)$, lower line with probability $1-v(k_1)$.}
\label{fig:TimeLine}
\end{figure}

Following the technique used in~\cite{ChoPasSin1998EPJB} the probability $f(k\mid\tau_1=1;\varrho)$ can be calculated by means of the chain rule
\begin{equation}
\label{eq:fkgen}
	f(k\mid\tau_1=1;\varrho)=\sum_{k_{1}=1}^{k}g(k_1;\varrho)h(k-k_1\mid k_1;\varrho)\,,
\end{equation}
where
\begin{equation}
	g(k_1;\varrho)=\Pr(\Delta k_1=k_1\mid\varrho)\,,\qquad h(k_2\mid k_1;\varrho)=\Pr(\Delta k_2=k_2\mid \Delta k_1 =k_1;\varrho)\,.
\end{equation}

Using similar consideration as in the case $f(k\mid\tau_1=0)$ we obtain
\begin{eqnarray}
	g(k_1;\varrho)=\sum_{n=1}^{\infty}P(n;\varrho) C(n,k_1)=\frac{p(1-\phi)}{1-p\gamma\phi}B_1^{k_1-1}\,.
\end{eqnarray}

Let further $\psi:=z/\varrho$. Then
\begin{equation}
	b=p\mathcal{P}_{\varrho,\to}(\underline{1}0|1)=p\mathcal{P}_{\sigma,\leftarrow}(\underline{0}1|0)=pz/\varrho=p\psi
\end{equation}
stands for the probability that a particle hops to the neighbouring site within next step. Let us denote $B_2=(1-p\psi)$. The probability $v(k_1)$ that the site 1 will be emptied by LP before $k_1$-th step is $v(k_1)=1-B_2^{k_1-1}$. Then
\begin{eqnarray}
	h(0\mid k_1)&=&v(k_1)p\gamma=p\gamma(1-B_2^{k_1-1})\\
	h(k_2\mid k_1) &=& v(k_1)(1-p\gamma)C(1,k_2)+[1-v(k_1)]\sum_{\ell=0}^{k_2-1}b(1-b)^{\ell}C(1,k_2-\ell)\\
	&=&(1-B_2^{k_1-1})pq^{k_2}\frac{1-p\gamma}{1-p}+B_2^{k_1-1}\frac{p\psi}{1-\psi}(B_2^{k_2}-q^{k_2})\,.
\end{eqnarray}
Hence
\begin{eqnarray}
	f(k\mid 1)&=&p\gamma A_1B_1^{k-1}-p\gamma A_1(B_1B_2)^{k-1}\\
	&+&\sum_{k_1=1}^{k-1}\left[\frac{p}{q}(1-p\gamma)A_1 q^{k-1}\left(\frac{B_1}{q}\right)^{k_1-1}+A_1A_2B_2^{k-1}B_1^{k_1-1}\right.\\
	&-&\left.\frac{p}{q}(1-p\gamma)A_1 q^{k-1}\left(\frac{B_1B_2}{q}\right)^{k_1-1}-A_1A_2q^{k-1}\left(\frac{B_1B_2}{q}\right)^{k_1-1}\right]\,,
\end{eqnarray}
where
\begin{equation}
	A_1=\frac{p(1-\phi)}{1-p\gamma\phi}\,,\qquad A_2=\frac{p\psi}{1-\psi}\,.
\end{equation}

Here we notice that from the equation~\eqref{eq:z} it follows
\begin{equation}
	B_1B_2=\left(1-\frac{pz/\sigma}{1-p\gamma(1-pz/\sigma)}\right)\left(1-\frac{pz}{\varrho}\right)=q\,.
\end{equation}
Then obtaining the result~\eqref{eq:fkfinfor} is a question of technical computation.

\section*{References}
\bibliographystyle{elsarticle-num}
\bibliography{references}

\begin{thebibliography}{10}
\expandafter\ifx\csname url\endcsname\relax
  \def\url#1{\texttt{#1}}\fi
\expandafter\ifx\csname urlprefix\endcsname\relax\def\urlprefix{URL }\fi
\expandafter\ifx\csname href\endcsname\relax
  \def\href#1#2{#2} \def\path#1{#1}\fi

\bibitem{ChoSanSch2000PR}
D.~Chowdhury, L.~Santen, A.~Schadschneider, Statistical physics of vehicular
  traffic and some related systems, Physics Reports 329~(4–6) (2000)
  199--329.

\bibitem{SchChoNis2010}
A.~Schadschneider, D.~Chowdhury, K.~Nishinari, Stochastic Transport in Complex
  Systems: From Molecules to Vehicles, Elsevier Science B. V., Amsterdam, 2010.

\bibitem{RajSanSchSch1998JSP}
N.~Rajewsky, L.~Santen, A.~Schadschneider, M.~Schreckenberg, The asymmetric
  exclusion process: Comparison of update procedures, Journal of Statistical
  Physics 92~(1-2) (1998) 151--194.

\bibitem{WolSchSch2006JPA}
M.~W\"olki, A.~Schadschneider, M.~Schreckenberg, Asymmetric exclusion processes
  with shuffled dynamics, Journal of Physics A: Mathematical and General 39~(1)
  (2006) 33.

\bibitem{AppCivHil2011JSM}
C.~Appert-Rolland, J.~Cividini, H.~J. Hilhors, Frozen shuffle update for an
  asymmetric exclusion process on a ring, Journal of Statistical Mechanics:
  Theory and Experiments 2011~(07) (2011) P07009.

\bibitem{AppCivHil2011JSMb}
C.~Appert-Rolland, J.~Cividini, H.~J. Hilhorst, Frozen shuffle update for a
  deterministic totally asymmetric simple exclusion process with open
  boundaries, Journal of Statistical Mechanics: Theory and Experiment 2011~(10)
  (2011) P10013.

\bibitem{DerPogPovPri2012JSM}
A.~E. Derbyshev, S.~S. Poghosyan, A.~M. Povolotsky, V.~B. Priezzhev, The
  totally asymmetric exclusion process with generalized update, Journal of
  Statistical Mechanics: Theory and Experiment 2012~(05) (2012) P05014.

\bibitem{ChoPasSin1998EPJB}
D.~Chowdhury, A.~Pasupathy, S.~Sinha, Distributions of time- and
  distance-headways in the nagel-schreckenberg model of vehicular traffic:
  Effects of hindrances, European Physical Journal B 5~(3) (1998) 781--786.

\bibitem{GhoMajCho1998PRE}
K.~Ghosh, A.~Majumdar, D.~Chowdhury, Distribution of time-headways in a
  particle-hopping model of vehicular traffic, Physical Review E 58 (1998)
  4012--4015.

\bibitem{NagSch1992JP}
K.~Nagel, M.~Schreckenberg, A cellular automaton model for freeway traffic,
  Journal de Physique I France 2~(12) (1992) 2221--2229.

\bibitem{Sch1999EPJ}
A.~Schadschneider, The nagel-schreckenberg model revisited, The European
  Physical Journal B 10 (1999) 573--582.

\bibitem{HraKrb2011Procedia}
P.~Hrab{\' a}k, M.~Krb{\' a}lek, Distance- and time-headway distribution for
  totally asymmetric simple exclusion process, Procedia - Social and Behavioral
  Sciences 20 (2011) 406--416.

\bibitem{DerEvaHakPas1993JoPA}
B.~Derrida, M.~Evans, V.~Hakim, V.~Pasquier, Exact solution of a 1d asymmetric
  exclusion model using a matrix formulation, Journal of Physics A:
  Mathematical and General 26~(7) (1993) 1493--1517.

\bibitem{BlyEva2007JoPA}
R.~Blythe, M.~Evans, Nonequilibrium steady states of matrix-product form: A
  solver's guide, Journal of Physics A: Mathematical and Theoretical 40~(46)
  (2007) R333--R441.

\bibitem{KrbHra2011JoPA}
M.~Krb{\' a}lek, P.~Hrab{\' a}k, Inter-particle gap distribution and spectral
  rigidity of totally asymmetric simple exclusion process with open boundaries,
  Journal of Physics A: Mathematical and Theoretical 44~(17) (2011)
  175203--175224.

\bibitem{SchSch1997JoPA}
A.~Schadschneider, M.~Schreckenberg, Car-oriented mean-field theory for traffic
  flow models, Journal of Physics A: Mathematical and General 30~(4) (1997)
  L69--L75.

\bibitem{AvrKoh1992PRA}
D.~Ben-Avraham, J.~K\"ohler, Mean-field (n,m)-cluster approximation for lattice
  models, Physical Review A 45~(12) (1992) 8358--8370.

\bibitem{SchSchNagIto1995PRE}
M.~Schreckenberg, A.~Schadschneider, K.~Nagel, N.~Ito, Discrete stochastic
  models for traffic flow, Physical Review E 51~(4) (1995) 2939--2949.

\bibitem{ZiaEvaMaj2004JSM}
R.~K.~P. Zia, M.~R. Evans, S.~N. Majumdar, Construction of the factorized
  steady state distribution in models of mass transport, Journal of Statistical
  Mechanics: Theory and Experiment 2004~(10) (2004) L10001.

\bibitem{EvaMajZia2004JoPA}
M.~R. Evans, S.~N. Majumdar, R.~K.~P. Zia, Factorized steady state in mass
  transport models, Journal of Physics A: Mathematical and Theoretical 37~(25)
  (2004) L275--L280.

\bibitem{EvaHan2005JoPA}
M.~Evans, T.~Hanney, Nonequilibrium statistical mechanics of the zero-range
  process and related models, Journal of Physics A: Mathematical and General
  38~(19) (2005) R195--R240.

\bibitem{May1990}
A.~D. May, Traffic Flow Fundamentals, Prentice Hall, Pearson Education, Upper
  Saddle River, 1990.

\bibitem{Treiber2013}
M.~Treiber, A.~Kesting, Traffic Flow Dynamics, Data, Models and Simulation,
  Springer, Berlin Heidelberg, 2013.

\bibitem{Krb2007JoPA}
M.~Krb\'alek, Equilibrium distributions in a thermodynamical traffic gas,
  Journal of Physics A: Mathematical and Theoretical 40~(22) (2007) 5813.

\bibitem{KrbHel2004PhysicaA}
M.~Krb\'alek, D.~Helbing, Determination of interaction potentials in freeway
  traffic from steady-state statistics, Physica A: Statistical Mechanics and
  its Applications 333~(0) (2004) 370--378.

\bibitem{KrbSebWag2001PRE}
M.~Krb\'{a}lek, P.~\v{S}eba, P.~Wagner, Headways in traffic flow: Remarks from
  a physical perspective, Physical Review E 64 (2001) 066119.

\bibitem{Krb2013JoPA}
M.~Krb\'alek, Theoretical predictions for vehicular headways and their
  clusters, Journal of Physics A: Mathematical and Theoretical 46~(44) (2013)
  445101.

\bibitem{DerPovPri2015PRE}
A.~E. Derbyshev, A.~M. Povolotsky, V.~B. Priezzhev, Emergence of jams in the
  generalized totally asymmetric simple exclusion process, Physical Review 91
  (2015) 022125.

\bibitem{Bremaud}
P.~Br\'{e}maud, Markov Chains: Gibbs Fields, Monte Carlo Simulation, and
  Queues, Texts in Applied Mathematics, Springer, New York, 1999.

\bibitem{Liggett}
T.~M. Liggett, Interacting Particle Systems, Classics in Mathematics, Springer,
  Berlin Heidelberg, 2005.

\bibitem{GroSchSpo2003JSP}
S.~Grosskinsky, G.~M. Sch\"{u}tz, H.~Spohn, Condensation in the zero range
  process: Stationary and dynamical properties, Journal of Statistical Physics
  113~(3-4) (2003) 389--410.

\bibitem{EvaRajSpe1999JSP}
M.~Evans, N.~Rajewsky, E.~Speer, Exact solution of a cellular automaton for
  traffic, Journal of Statistical Physics 95~(1-2) (1999) 45--96.

\bibitem{KauMahHarPRE2005}
J.~Kaupuzs, R.~Mahnke, R.~Harris, Zero-range model of traffic flow, Physical
  Review E 72~(5) (2005) 056125.

\end{thebibliography}

\end{document}